\newtheorem{theorem}{Theorem}[section]
\newtheorem{lemma}[theorem]{Lemma}
\newcommand{\comment}[1]{}
\newcommand{\QED}{\mbox{}\hfill \rule{3pt}{8pt}\vspace{10pt}\par}
{\mbox{}\\[5pt]}
\def\suchthat{\;:\;}
\newcommand{\size}[1]{\left|#1\right|}
\newcommand{\den}[1]{\operatorname{density}\left(#1\right)}
\newcommand{\pr}[1]{{\sf Pr}\left(#1\right)}
\newcommand{\expec}[1]{{\sf E} \left[#1\right]}
\newcommand{\ignore}[1]{}
\newcommand{\eat}[1]{}
\begin{document}
\title{Finding Dense Subgraphs in $G(n,1/2)$}
\author{
Atish Das Sarma\footnote{Work done while at Microsoft Research} \\
Georgia Institute of Technology
\and
Amit Deshpande \\
Microsoft Research
\and
Ravi Kannan \\
Microsoft Research
}
\date{}
\maketitle

\section{Introduction}
Finding the largest clique is a notoriously hard problem, even on random graphs. It is known that the clique number of a random graph $G(n, 1/2)$ is almost surely either $k$ or $k+1$, where $k = \lceil 2\log n - 2\log\log n - 1 \rceil$ (Section 4.5 in \cite{AS}, also \cite{B}). However, a simple greedy algorithm finds a clique of size only $\log n \left(1+o(1)\right)$, with high probability, and finding larger cliques -- that of size even $(1+\epsilon) \log n$ -- in randomized polynomial time has been a long-standing open problem \cite{K}. In this paper, we study the following generalization: given a random graph $G(n, 1/2)$ find the largest subgraph with edge density at least $(1-\delta)$. We show that a simple modification of the greedy algorithm finds a subset of $2 \log n$ vertices whose induced subgraph has edge density at least $0.951$, with high probability. To complement this, we show that almost surely there is no subset of $2.784 \log n$ vertices whose induced subgraph has edge density $0.951$ or more.

We use $G(n, p)$ to denote a random graph on $n$ vertices where each pair of vertices appears as an edge independently with probability $p$. We use $V$ to denote its set of vertices and $E$ to denote its set of edges. Moreover, given two subsets $S \subseteq V$ and $T \subseteq V$, we use $E(S, T)$ to denote the set of edges with one endpoint in $S$ and another endpoint in $T$. The density of the subgraph induced by vertices in $S$ is given by
\[
\den{S} = \frac{\size{E(S, S)}}{{\size{S} \choose 2}}.
\]
Therefore, the expected density of $G(n, 1/2)$ is $1/2$ and the density of any clique is $1$.

In Section~\ref{sec:algo} we describe our algorithm for finding subgraphs of density $1-\delta$. We give a bound on the largest subgraph of density $1-\delta$ in the following Section~\ref{sec:ub}. Finally, in Section~\ref{sec:conclusions}, we present some open problems.

\section{Algorithm for finding large subgraph of density $1-\delta$}
\label{sec:algo}

In this section, we describe our algorithm and give a relationship between the size of the subgraph obtained by the algorithm, and its density. 
In particular, we show that the algorithm can be used to obtain a subset of $2\log n$ vertices of density $0.951$, with high probability. 

\begin{framed}
\noindent{\sc Greedy Algorithm to pick a dense subgraph: } \\
\noindent Input: a random graph $G(n, 1/2)$ and $\delta > 0$. \\
\noindent Output: a subset $S \subseteq V$ of size $k = 2 \log n$.

\begin{enumerate}
\item Partition the vertices into disjoint sets $V = V_{1} \cup V_{2} \cup \dotsb \cup V_{k}$, each of size $n/k$.
\item Initialize $S_{0} = \emptyset$.
\item For $i=0$ to $k-1$ do:
\begin{enumerate}
\item Pick $v_{i+1} \in V_{i+1}$ that has the maximum number of edges to $S_{i}$, i.e.,
\[
v_{i+1} = \underset{{v \in V_{i+1}}}{\operatorname{argmax}} \size{E(v_{i+1}, S_{i})}.
\]
\item $S_{i+1} \leftarrow S_{i} \cup \{v_{i+1}\}$.
\end{enumerate}
\item Return $S = S_{k-1}$.
\end{enumerate}
\end{framed}

Notice that the algorithm first partitions all nodes into $k$ random subsets of the same size, and then picks one vertex from each partition. This partitioning is necessary to argue about independence in our analysis of choosing vertices greedily.

In the analysis below, $H(\delta)$ is the standard notation of the Shannon entropy function, which is $-(\delta\log \delta + (1-\delta)\log (1-\delta))$. The following lemma gives a lower bound on the number of edges we can expect to add to our subgraph, for the $i$-th vertex added by the algorithm.


\begin{lemma} \label{lemma:one-step}
For any $0 \leq i \leq k$ and $\delta_{i}$ that satisfies
\[
H(\delta_{i}) \geq 1 - \frac{1}{i} \log\left(\frac{n}{2k \ln(\log n)}\right),
\]
we have
\[
\pr{\size{E(v_{i+1}, S_{i})} \geq \left(1 - \delta_{i}\right)i} \geq 1 - \frac{1}{\log^{2} n}.
\]
\end{lemma}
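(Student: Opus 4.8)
The plan is to show that among the $n/k$ candidate vertices in $V_{i+1}$, at least one of them has many edges into the already-chosen set $S_i$, which has size exactly $i$. The key structural observation is that the partition into $V_1,\dots,V_k$ guarantees independence: the set $S_i$ is determined entirely by edges \emph{within} $V_1\cup\dots\cup V_i$, and is therefore independent of the edges connecting any vertex $v\in V_{i+1}$ to $S_i$. Consequently, for a fixed (or conditioned) $S_i$ of size $i$, each candidate $v\in V_{i+1}$ has $\size{E(v,S_i)}$ distributed as a $\mathrm{Binomial}(i,1/2)$ random variable, and these $n/k$ counts are mutually independent across the distinct candidate vertices.

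\medskip
\noindent\textbf{Lower-tail estimate for a single candidate.} First I would estimate the probability that one fixed candidate $v$ achieves the threshold, i.e.\ $\pr{\size{E(v,S_i)}\geq(1-\delta_i)i}$. Since we want the number of edges to be \emph{large} (density close to $1$), with $\delta_i$ presumably at most $1/2$, this is an upper-tail event for the $\mathrm{Binomial}(i,1/2)$. The standard entropy bound gives
\[
\pr{\mathrm{Bin}(i,1/2)\geq(1-\delta_i)i}\;\geq\;\frac{1}{i+1}\,2^{\,-i\left(1-H(\delta_i)\right)},
\]
using that $\binom{i}{(1-\delta_i)i}\approx 2^{\,i\,H(\delta_i)}$ and that the single largest binomial coefficient term already contributes a $2^{iH(\delta_i)}/2^{i}$ mass (up to the polynomial $1/(i+1)$ prefactor). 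This is exactly where the hypothesis on $H(\delta_i)$ enters: the condition $H(\delta_i)\geq 1-\tfrac{1}{i}\log\!\big(\tfrac{n}{2k\ln(\log n)}\big)$ is equivalent to $i\big(1-H(\delta_i)\big)\leq\log\!\big(\tfrac{n}{2k\ln(\log n)}\big)$, so the per-candidate success probability $p_i$ is at least roughly $\frac{2k\ln(\log n)}{n}$ up to the polynomial factor.

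\medskip
\noindent\textbf{Boosting over $n/k$ independent candidates.} Because the $n/k$ candidates in $V_{i+1}$ give independent trials, the probability that \emph{none} of them meets the threshold is at most $(1-p_i)^{n/k}\leq \exp(-p_i\, n/k)$. Plugging in $p_i\gtrsim \tfrac{2k\ln(\log n)}{n}$ yields $\exp(-p_i n/k)\lesssim \exp\!\big(-2\ln(\log n)\big)=1/\log^2 n$, which is precisely the failure probability claimed in the lemma. The greedy step picks the candidate maximizing $\size{E(v,S_i)}$, so the success of any single candidate suffices, and this gives $\pr{\size{E(v_{i+1},S_i)}\geq(1-\delta_i)i}\geq 1-1/\log^2 n$.

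\medskip
\noindent\textbf{Main obstacle.} The delicate point is getting the lower bound on the single binomial coefficient tight enough so that the polynomial factor $1/(i+1)$ does not spoil the final bound; one must carry the factor of $2$ and the $\ln(\log n)$ slack deliberately placed in the hypothesis so that after multiplying by $n/k$ in the exponent, the leftover constant exactly produces $1/\log^2 n$ rather than something weaker. I would also be careful to justify the conditioning formally: since $S_i$ depends only on $V_1\cup\dots\cup V_i$ while the tested edges go to $V_{i+1}$, I would condition on the realization of $S_i$ and note the bound holds uniformly over all outcomes of $S_i$ with $\size{S_i}=i$, so it holds unconditionally.
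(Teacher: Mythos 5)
Your proposal is correct and follows essentially the same route as the paper's proof: an entropy-based lower bound on the per-candidate binomial upper tail, the hypothesis on $H(\delta_i)$ translating this into a success probability of at least $\tfrac{2k\ln(\log n)}{n}$, and independence across the $n/k$ candidates in $V_{i+1}$ giving $(1-p_i)^{n/k}\leq e^{-2\ln(\log n)}=1/\log^2 n$. The polynomial prefactor $1/(i+1)$ you flag as the main obstacle is real but is exactly what the paper sweeps into its $2^{(H(\delta_i)+o(1)-1)i}$ notation (it is absorbed by an $o(1)$ adjustment to $\delta_i$, since $\frac{\log(i+1)}{i}\to 0$ for $i\leq k=2\log n$), so you are, if anything, slightly more careful than the paper on this point.
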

\begin{proof}
We know by the previous results, that as long as $k<\log n$, the vertex added has all edges to $S_{k-1}$. Consider $k\geq \log n$. The algorithm has $\frac{n}{l}$ vertices to choose from. The expected number of vertices among these, with at least $(1-\delta_k)k$ vertices is given by,

Fix $v \in V_{i+1}$. The probability that $v$ has at least $(1-\delta_{i})i$ edges to $S_{i}$ is
\[
\pr{\size{E(v, S_{i})} \geq (1-\delta_{i})i} = \sum_{t=(1-\delta_{i})i}^{i} {i \choose t} 2^{-i} = 2^{\left(H(\delta_{i}) + o(1) - 1\right)i},
\]
where $H(\delta) = - \delta \log \delta - (1-\delta) \log (1-\delta)$ is the Shannon entropy (here $\log$ is taken with base $2$). Using independence of these events for different $v \in V_{i+1}$, we get
\begin{align*}
\pr{\size{E(v, S_{i})} < (1-\delta_{i})i,~ \forall v \in V_{i+1}} & \leq \left(1 - 2^{\left(H(\delta_{i}) - 1\right)i}\right)^{n/k} \\
& \leq \left(1 - \frac{2k \ln(\log n)}{n}\right)^{n/k} \\
& \leq \frac{1}{\log^{2} n}.
\end{align*}
Therefore,
\[
\pr{\size{E(v_{i+1}, S_{i})} \geq \left(1 - \delta_{i}\right)i} \geq 1 - \frac{1}{\log^{2} n}.
\]
\end{proof}


We now give a union bound over all $k$ additions of vertices, using the previous lemma.

\begin{lemma} \label{lemma:k-steps}
\[
\pr{\size{E(S, S)} \geq \sum_{i=0}^{k-1} \left(1-\delta_{i}\right)i} \rightarrow 1~ \text{as}~ n \rightarrow \infty.
\]
\end{lemma}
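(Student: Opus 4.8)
The plan is to write the total edge count of the final subgraph as a sum of the per-step contributions and then apply Lemma \ref{lemma:one-step} at every step through a union bound. First I would record the deterministic identity
\[
\size{E(S,S)} = \sum_{i=0}^{k-1} \size{E(v_{i+1}, S_i)},
\]
which holds because the algorithm builds $S$ one vertex at a time: when $v_{i+1}$ is added to $S_i$, the only new edges inside the growing set are exactly those from $v_{i+1}$ to the previously chosen vertices, namely $E(v_{i+1}, S_i)$. Summing these contributions over $i = 0, \dots, k-1$ counts each edge of $S$ exactly once.

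Next I would define, for each $i$, the bad event $B_i = \{\size{E(v_{i+1}, S_i)} < (1-\delta_i)i\}$. On the complement $\bigcap_{i} \overline{B_i}$ every per-step contribution meets its target, so the displayed identity immediately yields $\size{E(S,S)} \geq \sum_{i=0}^{k-1}(1-\delta_i)i$. It therefore suffices to show $\pr{\bigcup_{i=0}^{k-1} B_i} \to 0$.

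The subtle point, and the step I expect to carry the real weight, is that $S_i$ is itself random and chosen adaptively, so Lemma \ref{lemma:one-step} must be invoked conditionally rather than in isolation. Here the initial partition $V = V_1 \cup \dotsb \cup V_k$ does the essential work: since $S_i \subseteq V_1 \cup \dotsb \cup V_i$ while the candidate vertices lie in the disjoint block $V_{i+1}$, the edges between $V_{i+1}$ and $S_i$ are independent fair coins that were never examined while constructing $S_i$. Consequently, conditioned on any realization of $S_i$ (equivalently, on the history up to step $i$), the estimate of Lemma \ref{lemma:one-step} applies verbatim, giving $\pr{B_i \mid S_i} \leq 1/\log^2 n$, and averaging over $S_i$ yields the marginal bound $\pr{B_i} \leq 1/\log^2 n$.

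Finally I would combine the steps by a union bound over the $k = 2\log n$ events:
\[
\pr{\bigcup_{i=0}^{k-1} B_i} \leq \sum_{i=0}^{k-1}\pr{B_i} \leq \frac{k}{\log^2 n} = \frac{2}{\log n} \longrightarrow 0,
\]
so the good event holds with probability at least $1 - 2/\log n \to 1$, which is exactly the claimed statement. No independence across steps is needed for the union bound itself; the only place independence enters is the within-step argument of the previous paragraph, which is precisely what the disjoint partitioning guarantees.
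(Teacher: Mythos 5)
Your proof is correct and rests on the same two pillars as the paper's proof: the per-step decomposition $\size{E(S,S)} = \sum_{i=0}^{k-1}\size{E(v_{i+1},S_i)}$ and Lemma \ref{lemma:one-step}. Where you differ is in how the steps are combined. The paper lower-bounds the success probability by the product $\prod_{i=0}^{k-1}\pr{\size{E(v_{i+1},S_i)}\geq(1-\delta_i)i}\geq(1-1/\log^2 n)^{k-1}$, appealing to independence across the disjoint blocks $V_1,\dotsc,V_k$; you instead union-bound the $k$ failure events $B_i$ to get probability at least $1-2/\log n$. Quantitatively the two routes give the same $1-O(1/\log n)$ guarantee. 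Your write-up is actually more careful on the one genuinely delicate point: the events in the paper's product are not literally independent, since $S_i$ is chosen adaptively based on earlier edge examinations, so the bare product inequality needs justification. Your argument --- that $S_i$ is determined by edges never incident to the unexamined coins between $V_{i+1}$ and $V_1\cup\dotsb\cup V_i$, hence Lemma \ref{lemma:one-step} holds conditionally on any realization of $S_i$, and the marginal bound $\pr{B_i}\leq 1/\log^2 n$ follows by averaging --- is exactly the missing justification, and the same conditional estimate would also repair the paper's product bound via the chain rule for conditional probabilities. (Incidentally, the paper's final display ``$\geq e^{1/\log n}$'' must be a typo, since a probability cannot exceed $1$; something like $1-2/\log n$, as in your version, is what is meant.)
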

\begin{proof}
Since $V_{1}, V_{2}, \dotsc, V_{k}$ are disjoint, using independence and Lemma \ref{lemma:one-step} we get
\begin{align*}
\pr{\size{E(S, S)} \geq \sum_{i=0}^{k-1} \left(1-\delta_{i}\right)i} & \geq \prod_{i=0}^{k-1} \pr{\size{E(v_{i+1}, S_{i})} \geq \left(1 - \delta_{i}\right)i} \\
& \geq \left(1 - \frac{1}{\log^{2} n}\right)^{k-1} \\
& \geq e^{1/\log n} \qquad \text{using $k = 2\log n$}
\end{align*}
\end{proof}

The point is that we are picking exactly one vertex from each vertex set/partition, and hence do not lose any randomness or independence of the edges. This now gives us a bound on the minimum number of edges one can expect, w.h.p., in the chosen set of $k$ vertices. We are not able to express, in a closed form, the size of a subgraph obtainable using this algorithm for a specific density. Therefore, we state the best density one can guarantee w.h.p. for $k=2\log n$. This is stated as a theorem below, which we prove subsequently.

\begin{theorem}
Our algorithm produces a subset $S \subseteq V$ of size $k = 2 \log n$ such that 
$\den{S} \gtrsim 0.951$, almost surely.
\end{theorem}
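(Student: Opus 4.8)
The plan is to combine the two lemmas into a high-probability lower bound on $\size{E(S,S)}$, divide by $\binom{k}{2}$ to get a bound on $\den{S}$, and then evaluate the resulting expression for $k = 2\log n$ by passing to a continuous approximation. By Lemma \ref{lemma:k-steps}, with probability tending to $1$ we have $\size{E(S,S)} \geq \sum_{i=0}^{k-1}(1-\delta_i)i$, and for each $i$ we are free to take $\delta_i$ as small as the constraint of Lemma \ref{lemma:one-step} permits. Since $H$ is increasing on $[0,\half]$ and we want the density as large as possible, I would choose $\delta_i$ to be the smallest nonnegative value with $H(\delta_i) \geq 1 - \frac{1}{i}\log\bigl(\frac{n}{2k\ln(\log n)}\bigr)$, so that $\den{S} \gtrsim \frac{1}{\binom{k}{2}}\sum_{i=0}^{k-1}(1-\delta_i)i$ almost surely.

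Next I would simplify the entropy constraint for the specific choice $k = 2\log n$. Writing $x = i/k$ and using $\log n = k/2$, the correction term $\log(2k\ln(\log n))$ is only $O(\log\log n) = o(\log n)$, so the constraint reduces, up to a $(1+o(1))$ factor, to $H(\delta_i) \geq 1 - \frac{1}{2x}$. Two regimes emerge: for $x \leq \half$ (that is, $i \leq \log n$) the right-hand side is nonpositive and one may take $\delta_i = 0$, recovering the fact already noted in the proof of Lemma \ref{lemma:one-step} that the early vertices join as a clique; for $x > \half$ one takes $\delta(x) \in (0,\half)$ to be the smaller root of $H(\delta(x)) = 1 - \frac{1}{2x}$.

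I would then approximate the sum by an integral. Since $i = kx$ and $\binom{k}{2} \sim k^2/2$, the Riemann sum $\frac{1}{\binom{k}{2}}\sum_i (1-\delta_i)i$ converges to
\[
2\int_0^1 \bigl(1 - \delta(x)\bigr)x\,dx = \frac14 + 2\int_{1/2}^1 \bigl(1-\delta(x)\bigr)x\,dx = 1 - 2\int_{1/2}^1 \delta(x)\,x\,dx,
\]
where the first equality uses $\delta(x)=0$ on $[0,\half]$. It then remains only to evaluate the loss term $2\int_{1/2}^1 \delta(x)\,x\,dx$, with $\delta(x)$ defined implicitly by $H(\delta(x)) = 1 - \frac{1}{2x}$.

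The main obstacle is precisely this last integral: because $\delta(x)$ is the inverse of the binary entropy function, there is no closed form, and the value $0.951$ must be obtained by numerical quadrature. Sampling $\delta(x)$ at a few points (for instance $\delta(1)\approx 0.11$, $\delta(0.8)\approx 0.072$, $\delta(0.6)\approx 0.025$, tapering to $0$ at $x=\half$) and integrating numerically gives $2\int_{1/2}^1\delta(x)\,x\,dx \approx 0.049$, hence $\den{S} \gtrsim 0.951$. Besides the quadrature, the only real care needed is in justifying the continuum limit — that replacing the sum by the integral, and the entropy constraint by its $(1+o(1))$ simplification, costs only an $o(1)$ error in the density — and in verifying that the $1/\log^2 n$ failure probabilities from Lemma \ref{lemma:one-step} do not accumulate, which Lemma \ref{lemma:k-steps} already handles.
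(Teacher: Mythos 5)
Your proposal is correct and takes essentially the same route as the paper: lower-bound $\size{E(S,S)}$ via Lemma \ref{lemma:k-steps} with the minimal admissible $\delta_i = H^{-1}\left(1 - \frac{1}{i}\log m\right)$ (taking $\delta_i = 0$ for the first $\approx \log n$ steps), replace the normalized sum by an integral, and evaluate the resulting entropy-inverse integral numerically. Your loss term $2\int_{1/2}^{1} x\, H^{-1}\left(1 - \frac{1}{2x}\right) dx$ is exactly the paper's $\frac{1}{2}\int_{0}^{\alpha} (1+x) H^{-1}\left(1 - \frac{1}{1+x}\right) dx$ with $\alpha = 1 + o(1)$ under the substitution $1 + x = 2u$, so your $\approx 0.049$ and the resulting bound of $0.951$ agree with the paper's computation.
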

\begin{proof}
From Lemma \ref{lemma:k-steps} we have that, almost surely,
\begin{align}
\size{E(S, S)} & \geq \sum_{i=0}^{k-1} \left(1-\delta_{i}\right)i \nonumber \\
& \geq \sum_{i=0}^{k-1} \left(1 - H^{-1}\left(1 - \frac{1}{i} \log\left(\frac{n}{2k \ln(\log n)}\right)\right)\right) i \nonumber \\
& = \sum_{i=0}^{k-1} i - \sum_{i=\log m}^{k-1} i H^{-1}\left(1 - \frac{\log m}{i}\right) \nonumber \\
& = {k \choose 2} - \sum_{i=\log m}^{k-1} i H^{-1}\left(1 - \frac{\log m}{i}\right), \label{eq:edges}
\end{align}
where $m = n/2k \ln(\log n)$. Here we use the fact that we can choose $\delta_{i} = 0$ for the first $\log m$ steps. Now let $k-1 = (1+\alpha) \log m$. Then
\begin{align}
& \sum_{i=\log m}^{k-1} i H^{-1}\left(1 - \frac{\log m}{i}\right) \nonumber \\
& = \sum_{i=\log m}^{(1+\alpha)\log m} i H^{-1}\left(1 - \frac{\log m}{i}\right) \nonumber \\
& = \sum_{t=0}^{\alpha \log m} \left(\log m + t\right) H^{-1}\left(1 - \frac{\log m}{\log m + t}\right) \nonumber \\
& = \log^{2} m \sum_{x=0}^{\alpha} (1+x) H^{-1}\left(1 - \frac{1}{1+x}\right) \nonumber \\
& \leq \log^{2} m \int_{0}^{\alpha} (1+x) H^{-1}\left(1 - \frac{1}{1+x}\right) dx, \label{eq:integral}
\end{align}
Now using Equations \eqref{eq:edges} and \eqref{eq:integral} we have
\begin{align*}
\den{S} & = \frac{\size{E(S, S)}}{{k \choose 2}} \\
& \geq 1 - \frac{\log^{2} m}{{k \choose 2}} \int_{0}^{\alpha} (1+x) H^{-1}\left(1 - \frac{1}{1+x}\right) dx \\
& \geq 1 - \frac{1}{2}\left(1+o(1)\right) \int_{0}^{\alpha} (1+x) H^{-1}\left(1 - \frac{1}{1+x}\right) dx \\
& \gtrsim 0.951.
\end{align*}
using
\[
\alpha = \frac{k}{\log m} - 1 = \frac{2 \log n}{\log n - \log \left(4 \log n \cdot \ln (\log n)\right)} - 1 = 1+o(1).
\]
and computing an upper bound on the integral numerically.
\end{proof}

\section{Upper bound on largest subgraph of density $1-\delta$}
\label{sec:ub}

In this section, we upper bound the size of the largest subgraph of density $1-\delta$ in $G(n,1/2)$.

\begin{theorem}
A random graph $G(n, 1/2)$ has no subgraph of size
\[
\frac{2 \log n + 2 \log e}{1 - H(\delta) - o(1)} + 1
\]
and density at least $1-\delta$, almost surely. In particular, there is no subgraph of size $2.784 \log n$ and density at least $0.951$, almost surely.
\end{theorem}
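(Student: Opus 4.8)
The plan is to prove an upper bound via the first-moment (union bound) method, which is the standard tool for such non-existence statements in $G(n,1/2)$. I want to show that, for $s$ equal to the claimed size, the expected number of subsets of size $s$ whose induced subgraph has density at least $1-\delta$ tends to $0$; then by Markov's inequality no such subset exists almost surely.

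First I would fix a target size $s$ and compute, for a single fixed subset $S$ of size $s$, the probability that $\den{S} \geq 1-\delta$. Writing $N = \binom{s}{2}$ for the number of potential edges inside $S$, each present independently with probability $1/2$, the event $\den{S} \geq 1-\delta$ is exactly the event that the number of edges is at least $(1-\delta)N$. By the same entropy estimate used in Lemma~\ref{lemma:one-step}, this upper tail of the binomial satisfies
\[
\pr{\size{E(S,S)} \geq (1-\delta)N} \leq 2^{\left(H(\delta) - 1 + o(1)\right)N}.
\]
Next I would take the union bound over all $\binom{n}{s}$ subsets of size $s$. Using $\binom{n}{s} \leq (en/s)^{s} \leq n^{s} e^{s}$, equivalently $\log \binom{n}{s} \leq s \log n + s \log e$, the expected number of dense $s$-subsets is at most
\[
\binom{n}{s} \cdot 2^{\left(H(\delta) - 1 + o(1)\right)\binom{s}{2}} \leq 2^{\,s(\log n + \log e) \;-\; \left(1 - H(\delta) - o(1)\right)\binom{s}{2}}.
\]

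The remaining step is to find the threshold $s$ at which this exponent becomes negative. Setting the exponent below zero and using $\binom{s}{2} = \frac{s(s-1)}{2}$, the $s$ cancels and the condition becomes
\[
\log n + \log e < \left(1 - H(\delta) - o(1)\right)\frac{s-1}{2},
\]
which rearranges exactly to
\[
s > \frac{2\log n + 2\log e}{1 - H(\delta) - o(1)} + 1.
\]
Thus for any $s$ exceeding this bound the expected count tends to $0$, giving the stated non-existence result; substituting $\delta$ with $1-H(\delta) \approx 0.951$'s complement and evaluating numerically yields the concrete constant $2.784 \log n$.

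The main obstacle I anticipate is making the $o(1)$ terms precise and uniform. The entropy approximation for the binomial tail carries lower-order corrections (the polynomial prefactors from Stirling's formula), and one must check that these are genuinely absorbed into the $o(1)$ in the exponent uniformly as $s$ grows with $n$ — in particular that the per-edge exponent $H(\delta) - 1 + o(1)$ does not degrade faster than the $\binom{s}{2}$ growth. A secondary point to handle carefully is that $\delta$ here is a fixed constant (so $1 - H(\delta)$ is bounded away from $0$), which is what makes the threshold scale like $\Theta(\log n)$; one should confirm the argument does not implicitly require $\delta$ to vary with $n$. Beyond that, the computation is a routine first-moment calculation.
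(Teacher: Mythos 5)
Your proposal is correct and follows essentially the same route as the paper: a first-moment computation bounding $\pr{\den{S} \geq 1-\delta}$ by $2^{\left(H(\delta)-1+o(1)\right)\binom{s}{2}}$ via the entropy estimate, a union bound over $\binom{n}{s} \leq (en/s)^{s}$ subsets, and the observation that after dividing out one factor of $s$ the threshold condition rearranges to exactly the stated bound, with Markov's inequality finishing the argument. Your numerical check ($\delta = 0.049$, $H(\delta) \approx 0.282$, giving $2/(1-H(\delta)) \approx 2.784$) matches the paper's concrete constant, so there is nothing further to reconcile.
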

\begin{proof}
For every $S \subseteq V$ of size $k$, define an indicator random variable $X_{S}$ as follows.
\[
X_{S} = \begin{cases} 1 \quad \text{if $S$ induces a subgraph of density $\geq 1-\delta$} \\ 0 \quad \text{otherwise.} \end{cases}
\]
Thus
\[
\expec{X_{S}} = \sum_{i=(1-\delta){k \choose 2}}^{{k \choose 2}} {{k \choose 2} \choose i} 2^{- {k \choose 2}} = 2^{\left(H(\delta) + o(1) - 1\right) {k \choose 2}}.
\]
By linearity of expectation, the expected number of subgraphs of size $k$ and density at least $1-\delta$ is
\begin{align*}
\expec{\sum_{S \suchthat \size{S}=k} X_{S}} & = \sum_{S \suchthat \size{S}=k} \expec{X_{S}} \\
& = {n \choose k} 2^{\left(H(\delta) + o(1) - 1\right) {k \choose 2}} \\
& \leq \left(\frac{en}{k}\right)^{k} \left(2^{\left(H(\delta) + o(1) - 1\right) \frac{k-1}{2}}\right)^{k} \\
& = \left(\frac{en}{k} \cdot 2^{\left(H(\delta) + o(1) - 1\right) \frac{k-1}{2}}\right)^{k} \\
& = \left(\frac{2^{\left(1 - H(\delta) - o(1)\right) \frac{k}{2}}}{k} \cdot 2^{\left(H(\delta) + o(1) - 1\right) \frac{k-1}{2}}\right)^{k} \\
& = \left(\frac{2^{\left(1 - H(\delta) + o(1)\right)/2}}{k}\right)^{k} \rightarrow 0, \quad \text{as $n \rightarrow \infty$},
\end{align*}
using
\[
k = \frac{2 \log n + 2 \log e}{1 - H(\delta) - o(1)} + 1.
\]
Therefore, by Markov inequality we have
\[
\pr{\sum_{S \suchthat \size{S}=k} X_{S} \geq 1} \leq \expec{\sum_{S \suchthat \size{S}=k} X_{S}} \rightarrow 0,
\]
as $n \rightarrow \infty$. Or in other words, almost surely there is no subset of $k$ vertices that induce a subgraph of density at least $1-\delta$.
\end{proof}

Notice that for density 0.951, the gap/ratio between the largest subgraph that exists and the largest subgraph that we can find is smaller than in the case of cliques. This is interesting, although not entirely unexpected as for density $0.5$, the whole graph can be output. This ratio for density 0.951 is however significantly smaller than 2; it is 2.784/2 = 1.392. 

\section{Conclusions}
\label{sec:conclusions}

For a concrete open problem, is there a polynomial time algorithm that outputs a subgraph of density $1-\epsilon$ and size $2\log n$ for any choice of $\epsilon > 0$ ?

Are there simple algorithms that beat the density bound of $0.95$ for subgraphs of size $2\log n$. Is there an $O(n^{\log n})$ time algorithm that finds the largest clique in $G(n,1/2)$? If not, what is the maximum density obtainable for a subgraph of size $2\log n$? Spectral techniques could be tried.

%

\bibliographystyle{amsplain}
\bibliography{clique-ref}
\end{document}